\DeclareFontFamily{OT1}{pzc}{}
\DeclareFontShape{OT1}{pzc}{m}{it}{<-> s * [1.10] pzcmi7t}{}
\DeclareMathAlphabet{\mathpzc}{OT1}{pzc}{m}{it}
\newtheorem{theorem}{Theorem}[section]
\newtheorem{lemma}[theorem]{Lemma}
\newtheorem{corollary}[theorem]{Corollary}
\newtheorem{remark}[theorem]{Remark}
\providecommand{\R}{\mathbb{R}}
\providecommand{\SO}{\mathbf{SO}}
\providecommand{\GL}{\mathbf{GL}}
\providecommand{\SE}{\mathbf{SE}}
\providecommand{\SIM}{\mathbf{SIM}}
\providecommand{\Aut}{\mathbf{Aut}} 
\providecommand{\gothsim}{\mathfrak{sim}}
\providecommand{\gothX}{\mathfrak{X}} 
\providecommand{\so}{\mathfrak{so}}
\providecommand{\se}{\mathfrak{se}}
\providecommand{\aut}{\mathfrak{aut}}
\providecommand{\calU}{\mathcal{U}}
\providecommand{\tT}{\mathrm{T}} 
\providecommand{\eb}{\mathbf{e}} 
\DeclareMathOperator{\tr}{tr}
\providecommand{\trace}[1]{\tr\left(#1\right)}
\DeclareMathOperator{\diag}{diag}
\providecommand{\Lyap}{\mathcal{L}} 
\providecommand{\td}{\mathrm{d}}
\providecommand{\ddt}{\frac{\td}{\td t}}
\providecommand{\scirc}{%
    \hbox{\fontfamily{\rmdefault}\fontsize{0.4\dimexpr(\f@size pt)}{0}\selectfont{\raisebox{-0.52ex}[0ex][-0.52ex]{$\circ$}}}}
\mathchardef\mhyphen="2D
\providecommand{\gl}{\mathfrak{gl}}
\providecommand{\trans}{ V}
\providecommand{\scale}{A}
\providecommand{\tvel}{ W }
\providecommand{\svel}{S}
\begin{document}





\headertitle{Constructive Equivariant Observer Design for Inertial Navigation}
\title{Constructive Equivariant Observer Design for Inertial Navigation}

\author{
\href{https://orcid.org/0000-0003-4391-7014}{\includegraphics[scale=0.06]{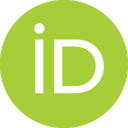}\hspace{1mm}
Pieter van Goor}
\\
    Systems Theory and Robotics Group \\
	Australian National University \\
    ACT, 2601, Australia \\
    \texttt{Pieter.vanGoor@anu.edu.au} \\
	\And	\href{https://orcid.org/0000-0002-7779-1264}{\includegraphics[scale=0.06]{orcid.png}\hspace{1mm}
    Tarek Hamel}
\\
    I3S (University C\^ote d'Azur, CNRS, Sophia Antipolis) \\
    and Insitut Universitaire de France \\
    \texttt{THamel@i3s.unice.fr} \\
	\And	\href{https://orcid.org/0000-0002-7803-2868}{\includegraphics[scale=0.06]{orcid.png}\hspace{1mm}
    Robert Mahony}
\\
    Systems Theory and Robotics Group \\
	Australian National University \\
    ACT, 2601, Australia \\
	\texttt{Robert.Mahony@anu.edu.au} \\
}

\maketitle

\vspace{1cm}

\begin{abstract} 
Inertial Navigation Systems (INS) are algorithms that fuse inertial measurements of angular velocity and specific acceleration with supplementary sensors including GNSS and magnetometers to estimate the position, velocity and attitude, or extended pose, of a vehicle.
The industry-standard extended Kalman filter (EKF) does not come with strong stability or robustness guarantees and can be subject to catastrophic failure.
This paper exploits a Lie group symmetry of the INS dynamics to propose the first nonlinear observer for INS with error dynamics that are almost-globally asymptotically and locally exponentially stable, independently of the chosen gains.
The observer is aided only by a GNSS measurement of position.
As expected, the convergence guarantee depends on persistence of excitation of the vehicle's specific acceleration in the inertial frame.
Simulation results demonstrate the observer's performance and its ability to converge from extreme errors in the initial state estimates.
\end{abstract}


\section{Introduction}


Inertial Navigation Systems (INS) are systems that estimate the \emph{extended pose}; that is, the pose, comprising position and attitude along with the linear velocity, of a rigid-body.
These particular variables are chosen since an inertial measurement unit (IMU) composed of a 3-axis gyroscope and a 3-axis accelerometer, measuring the angular velocity and specific acceleration, respectively, can theoretically be forward-integrated exactly to compute the evolution of the extended pose.
In practice, noise associated with MEMS\footnote{Micro Electrical Mechanical Systems} IMU measurements lead such methods to diverge quickly from the true extended pose \citep{2007_woodman_IntroductionInertialNavigation} and additional supporting sensors such as GNSS position, GNSS velocity, magnetometers, etc., must be fused into the state estimate.
Due to the importance of the application and the diversity of sensors there are a number of algorithms and approaches still under consideration in the literature, despite the fact that this problem has been under active research since the 1950s.

The archetypical INS problem is the fusion of IMU measurements with GNSS measurements, with the extended Kalman filter (EKF) providing the industry-standard solution \citep{1979_maybeck_StochasticModelsEstimation,2005_george_TightlyCoupledINS}.
A key disadvantage of EKF solutions, however, is that they do not provide guarantees of stability outside of a local trajectory-dependent domain. This has led to a number of nonlinear observers being proposed with more powerful stability and convergence guarantees.
In a particularly early work, \cite{2001_vik_NonlinearObserverGPS} developed a nonlinear observer for the GNSS-aided INS problem that additionally relied on the availability of a direct measurement of vehicle attitude in the solution.
Advances in techniques for attitude estimation from IMU and magnetometer measurements, including the complementary filter \citep{2008_mahony_NonlinearComplementaryFilters} and the invariant extended Kalman filter (IEKF) \citep{2007_bonnabel_LeftinvariantExtendedKalman}, inspired a number of new approaches to nonlinear designs for INS.
\cite{2011_barczyk_InvariantExtendedKalman} proposed an IEKF design for INS aided by GNSS position and magnetometer measurements using the symmetry group $\R^3\times \SO(3)\times \R^3\times \R^3$, and provided results demonstrating improved performance over a conventional EKF design.
\cite{2012_grip_ObserversInterconnectedNonlinear} developed a generic observer design for connecting a nonlinear and a linear observer such that the resulting combined observer is exponentially stable under appropriate assumptions on the chosen gains and initial system conditions.
They then applied this generic design to provide a solution for INS aided by GNSS position and velocity and magnetometer measurements.
A semi-globally stable nonlinear observer for INS with magnetometer and GNSS position was presented by \cite{2013_grip_NonlinearObserverGNSSaided}, and took into account the curvature and rotation of the Earth.
This was extended in \cite{2017_hansen_NonlinearObserverDesign} to compensate for the time-delay typical in real-world GNSS measurements.
\cite{2017_barrau_InvariantExtendedKalman} developed a general theory of IEKF, and demonstrated an application INS using the novel extended Special Euclidean group $\SE_2(3)$.
They showed that the INS dynamics are `group-affine' under this symmetry, and that this leads to state-independent error dynamics and a trajectory-independent domain of local stability for the IEKF.
Recently, \cite{2019_berkane_PositionVelocityAttitude} proposed a semi-globally exponentially stable nonlinear observer for INS with magnetometer and generic position measurements; that is, measurements of position that may, for example, include only horizontal components or a single range component.
\cite{2021_berkane_NonlinearNavigationObserver} then extended their previous work to show that the system's observability is characterised entirely by the persistence of excitation of the position measurements, and provided real-world experimental validation of the proposed observer.
The nonlinear observer community has made many significant contributions to the INS problem, including designs featuring trajectory-independent local stability and semi-global stability; however, no existing observer is almost-globally asymptotically and locally exponentially stable.

This paper builds on the authors' recent work on constructive observer design for group-affine systems \citep{2021_vangoor_AutonomousErrorConstructive} and uses this methodology to develop a novel GNSS position-aided INS solution.
Specifically, the Lie group observer architecture proposed in \citep{2021_vangoor_AutonomousErrorConstructive} is applied to the INS problem, and correction terms are identified through a Lyapunov design.
To the authors understanding, this is the first work that provides almost-global (apart from a set of measure zero) asymptotic and local exponential, stability guarantees of the error dynamics \citep{2012_grip_ObserversInterconnectedNonlinear,2017_barrau_InvariantExtendedKalman,2019_berkane_PositionVelocityAttitude}.
In contrast to other recent nonlinear INS observers, the proposed observer does not rely on magnetometer measurements and requires only persistence of excitation of the vehicle's specific acceleration in the inertial frame.
A simulation experiment demonstrates the convergence of the observer from an extreme initial condition where the difference between the true and estimated attitude is $0.99\pi$~rad.

Although the proposed observer may not provide the same performance as a stochastic filter when the initial condition is close to the true system state, it is a useful addition to avionics systems to provide (almost)-global robustness guarantees.
In addition, for small RPAS vehicles, where stochastic filtering suffers badly from highly nonlinear noise characteristics, observers of this type are often the only technology that works.

\section{Preliminaries}

The special orthogonal group is the Lie group of 3D rotations,
\begin{align*}
    \SO(3) := \{
        R \in \R^{3 \times 3} \mid R^\top R = I_3, \; \det(R) = 1
    \}.
\end{align*}
For any vector $\Omega \in \R^3$, define
\begin{align*}
    \Omega^\times
    = \begin{pmatrix}
        0 & -\Omega_3 & \Omega_2 \\
        \Omega_3 & 0 & -\Omega_1 \\
        -\Omega_2 & \Omega_1 & 0
    \end{pmatrix}.
\end{align*}
Then $\Omega^\times v = \Omega \times v$ for any $v \in \R^3$ where $\times$ is the usual vector (cross) product.
The Lie algebra of $\SO(3)$ is defined
\begin{align*}
    \so(3) := \{
        \Omega^\times \in \R^{3 \times 3} \mid \Omega \in \R^3
    \}.
\end{align*}
For any two vectors $a,b \in \R^3$, one has the following identities:
\begin{align}
    a^\times b &= -b^\times a, &
    (a^\times)^\top &= -a^\times, \notag \\
    a^\times b^\times &= b a^\top - a^\top b I_3, &
    (a \times b)^\times &= b a^\top - a b^\top.
    \label{eq:so3_identities}
\end{align}

The extended special Euclidean group and its Lie algebra are defined
\begin{align*}
    \SE_2(3) &:= \left\{
    \begin{pmatrix}
        R &   \trans  \\
        0  & I_2 
    \end{pmatrix} \in \R^{5 \times 5}
    \;\middle\vert\;
     R \in \SO(3), \; \trans \in \R^{3 \times 2}
    \right\}, \\
    \se_2(3) &:= \left\{
    \begin{pmatrix}
        \Omega^\times & \tvel \\ 0_{2\times 3} & 0_{2\times 2}
    \end{pmatrix} \in \R^{5 \times 5}
    \;\middle\vert\;
    \Omega \in \R^3, \; \tvel \in \R^{3 \times 2}
    \right\}.
\end{align*}
Here $V = \begin{pmatrix} v & x \end{pmatrix} \in \R^{3 \times 2}$  is thought of as coordinates for an element of the tangent bundle $\tT \R^3$; that is, a velocity $v \in \tT_x \R^3$ at base point $x \in \R^3$.
An element of $\SE_2(3)$ may be denoted $X = (R,\trans)$ for convenience, where $R \in \SO(3)$ and $\trans \in \R^{3 \times 2}$.
Likewise, an element of $\se_2(3)$ may be denoted $\Delta = (\Omega_\Delta, \tvel_\Delta)$, where $\Omega_\Delta \in \R^3$ and $\tvel_\Delta \in \R^{3 \times 2}$.

For any $A,B \in \R^{n \times n}$, the matrix commutator is given by
\begin{align*}
    [A, B] := AB - BA.
\end{align*}

A signal $y(t) \in \R^3$ is said to be \emph{persistently exciting} if there exist $\mu, T > 0$ such that, for all $t \geq 0$
\begin{align}
    \int_t^{t+T} \vert b \times y(\tau) \vert^2 \td \tau \geq \mu,
    \label{eq:persistence_excitation}
\end{align}
for all directions $b \in \R^3$ with $\vert b \vert = 1$.

\subsection{Automorphisms of $\SE_2(3)$}

An \emph{automorphism} of $\SE_2(3)$ is a diffeomorphism $\sigma : \SE_2(3) \to \SE_2(3)$ such that $\sigma(I) = I$ and $\sigma(XY) = \sigma(X)\sigma(Y)$. 
The set of all such maps, denoted $\Aut(\SE_2(3))$, is a Lie-group \cite{}. 
The following development provides an explicit realisation of $\Aut(\SE_2(3))$.
 
Define the matrix Lie group $\SIM_2(3)$ and its Lie algebra $\gothsim_2(3)$ to be
\begin{align*}
    \SIM_2(3) &:=
    \left\{
        \begin{pmatrix}
            R & \trans \\ 0_{2\times 3} & \scale 
        \end{pmatrix}
        \,\middle|\,
        R \in \SO(3),
        \trans \in \R^{3\times 2},
        \scale \in \GL(2)
    \right\}, \\
    \gothsim_2(3) &:=
    \left\{
        \begin{pmatrix}
            \Omega^\times & \tvel \\ 0_{2\times 3} & \svel
        \end{pmatrix}
        \,\middle|\,
        \Omega \in \R^3,
        \tvel \in \R^{3\times 2},
        \svel \in \gl(2)
    \right\}.
\end{align*}
An element of $\SIM_2(3)$ may be denoted $Z = (R_Z,\trans_Z,\scale_Z)$ for convenience, where $R_Z \in \SO(3), \trans_Z \in \R^{3\times 2}, \scale_Z \in \GL(2)$.
Likewise, an element of $\gothsim_2(3)$ may be denoted $\Gamma = (\Omega_\Gamma, \tvel_\Gamma, \svel_\Gamma)$, where $\Omega_\Gamma \in \R^3, \tvel_\Gamma \in \R^{3 \times 2}, \svel_\Gamma \in \gl(2)$.

\begin{lemma}
Let $\sigma_Z: \SE_2(3) \to \SE_2(3)$ be defined by $\sigma_Z(X) = Z X Z^{-1}$, in the sense of matrix multiplication, where $Z = (R_Z, \trans_Z, \scale_Z) \in \SIM_2(3)$.
Then $\sigma_Z$ is an automorphism of $\SE_2(3)$; i.e. $\sigma_Z \in \Aut(\SE_2(3))$.
\end{lemma}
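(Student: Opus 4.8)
The plan is to check the three defining properties of an automorphism in turn, observing that two of them hold automatically for any conjugation and that the real content lies in a single closure computation.

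First, the identity-preservation and homomorphism properties are immediate from the definition of conjugation: $\sigma_Z(I) = ZIZ^{-1} = I$, and for any $X, Y \in \SE_2(3)$ one has $\sigma_Z(XY) = Z(XY)Z^{-1} = (ZXZ^{-1})(ZYZ^{-1}) = \sigma_Z(X)\sigma_Z(Y)$, since the interior factor $Z^{-1}Z$ cancels. Neither step uses the specific block structure of $\SE_2(3)$ or $\SIM_2(3)$, so these are purely formal.

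The key step is to verify that $\sigma_Z$ actually maps $\SE_2(3)$ into $\SE_2(3)$; that is, that conjugation by a $\SIM_2(3)$ element preserves the subgroup. I would write $Z^{-1}$ explicitly in block form, using that $\SO(3)$ elements invert by transpose and $\scale_Z \in \GL(2)$ inverts within $\GL(2)$, and then multiply out the three factors of $ZXZ^{-1}$ for a generic $X = (R, \trans) \in \SE_2(3)$. The resulting top-left block is $R_Z R R_Z^\top = R_Z R R_Z^{-1}$, a product of three $\SO(3)$ elements and hence itself in $\SO(3)$; the off-diagonal blocks remain of the correct $3 \times 2$ and $2 \times 3$ shapes; and, crucially, the bottom-right block collapses to $\scale_Z \scale_Z^{-1} = I_2$. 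This last cancellation is the one point that could a priori fail — it is exactly what guarantees that the $\GL(2)$ component of $Z$ does not leak into the bottom-right corner and destroy membership in $\SE_2(3)$ — so I expect this to be the main (if modest) obstacle and the place to compute carefully.

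Finally, to confirm $\sigma_Z$ is a diffeomorphism, I would note that matrix multiplication is smooth, so $\sigma_Z$ is smooth, and that composition of conjugations satisfies $\sigma_Z \circ \sigma_W = \sigma_{ZW}$. Since $\SIM_2(3)$ is a group, $Z^{-1} \in \SIM_2(3)$, so $\sigma_{Z^{-1}}$ is a well-defined smooth self-map of $\SE_2(3)$ by the closure argument just established, and $\sigma_Z \circ \sigma_{Z^{-1}} = \sigma_I = \id = \sigma_{Z^{-1}} \circ \sigma_Z$. Hence $\sigma_Z$ is a smooth bijection with smooth inverse, which together with the homomorphism property completes the verification that $\sigma_Z \in \Aut(\SE_2(3))$.
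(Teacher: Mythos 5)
Your proof is correct and follows essentially the same route as the paper's: the identity and homomorphism properties are dispatched formally, and the substance is the explicit block computation of $Z X Z^{-1}$ showing the top-left block stays in $\SO(3)$ and the bottom-right block collapses to $\scale_Z \scale_Z^{-1} = I_2$. Your additional remarks on smoothness and $\sigma_Z \circ \sigma_{Z^{-1}} = \id$ merely flesh out what the paper calls ``straightforward.''
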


\begin{proof}
It is easy to see that $\sigma_Z(I_5) = I_5$ and that $\sigma_Z(XY) = \sigma_Z(X) \sigma_Z(Y)$.
It is also straightforward to identify the inverse $\sigma_Z^{-1}(X) := Z^{-1} X Z$.
What remains is to show that $\sigma_Z(X) \in \SE_2(3)$ for any $Z \in \SIM_2(3)$ and $X \in \SE_2(3)$.
Let $X = (R_X, \trans_X) \in \SE_2(3)$.
Direct computation yields
\begin{align*}
    \sigma_Z(X) &= Z X Z^{-1}, \\
    &= \begin{pmatrix}
        R_Z & \trans_Z \\ 0_{2 \times 3} & \scale_Z
    \end{pmatrix}
    \begin{pmatrix}
        R_X & \trans_X \\ 0_{2 \times 3} & I_2
    \end{pmatrix}
    \begin{pmatrix}
        R_Z & \trans_Z \\ 0_{2 \times 3} & \scale_Z
    \end{pmatrix}^{-1}, \\
    &= \begin{pmatrix}
        R_Z R_X R_Z^\top &\,& R_Z \trans_X \scale_Z^{-1} + (I_3 - R_Z R_X R_Z^\top) \trans_Z \scale_Z^{-1} \\ 0_{2 \times 3} &\,& I_2
    \end{pmatrix}.
\end{align*}
Since $R_Z R_X R_Z^\top \in \SO(3)$, it is clear that $\sigma_Z(X) \in \SE_2(3)$, as required.
Therefore, $\sigma_Z$ is an automorphism of $\SE_2(3)$.
\end{proof}

The Lie algebra of $\Aut(\SE_2(3))$ is denoted $\aut(\SE_2(3))$ and consists of the vector fields $f \in \gothX(\SE_2(3))$ such that $f(XY) = f(X) Y + X f(Y)$ and $f(I) = 0$.

\begin{corollary}
Let $\Gamma = (\Omega_\Gamma, \tvel_\Gamma, \svel_\Gamma) \in \gothsim_2(3)$.
Then the vector field $f_\Gamma \in \gothX(\SE_2(3))$ defined by $f_\Gamma(X) = [\Gamma, X] = \Gamma X - X \Gamma$ is an element of $\aut(\SE_2(3))$; that is,
\begin{align*}
    f_\Gamma(I) &= 0, &
    f_\Gamma(XY) &= f_\Gamma(X) Y + X f_\Gamma(Y).
\end{align*}
\end{corollary}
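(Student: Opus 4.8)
The plan is to realise $f_\Gamma$ as the infinitesimal generator of the conjugation automorphisms constructed in the preceding Lemma, so that both defining conditions of $\aut(\SE_2(3))$, together with the (nontrivial) requirement that $f_\Gamma$ actually be tangent to $\SE_2(3)$, drop out by differentiation. Concretely, I would introduce the one-parameter subgroup $Z(t) = \exp(t\Gamma) \in \SIM_2(3)$, which satisfies $Z(0) = I$ and $\dot{Z}(0) = \Gamma$, and form the curve of automorphisms $\sigma_{Z(t)}(X) = Z(t) X Z(t)^{-1}$. By the Lemma, for each fixed $X \in \SE_2(3)$ this curve lies entirely in $\SE_2(3)$.

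First I would differentiate the curve $t \mapsto \sigma_{Z(t)}(X)$ at $t = 0$. Using $\dot{Z}(0) = \Gamma$ together with the corresponding derivative $-\Gamma$ of the inverse, the product rule yields the tangent vector $\Gamma X - X \Gamma = [\Gamma, X] = f_\Gamma(X)$. Because the curve is contained in $\SE_2(3)$, its velocity at $t = 0$ lies in $T_X \SE_2(3)$; this is exactly what makes $f_\Gamma$ a genuine vector field on $\SE_2(3)$, and it is the only place where the hypothesis $\Gamma \in \gothsim_2(3)$ is used. I expect this tangency check to be the main point of the argument, in the sense that it is the only substantive claim: the algebraic identities below hold for an arbitrary matrix $\Gamma$, so without the ambient group structure there would be no reason for $f_\Gamma(X)$ to point along $\SE_2(3)$.

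Finally I would verify the two algebraic conditions, which follow immediately from the derivation property of the matrix commutator. For the identity one has $f_\Gamma(I) = \Gamma I - I \Gamma = 0$, and for the Leibniz rule a one-line expansion,
\begin{align*}
    f_\Gamma(X) Y + X f_\Gamma(Y)
    &= (\Gamma X - X \Gamma) Y + X (\Gamma Y - Y \Gamma) \\
    &= \Gamma X Y - X Y \Gamma = f_\Gamma(XY),
\end{align*}
in which the two inner terms cancel. Combining the tangency established by differentiation with these two identities shows $f_\Gamma \in \aut(\SE_2(3))$, as required.
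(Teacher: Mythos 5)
Your proof is correct and follows exactly the route the paper intends (the paper leaves the corollary unproved, presenting it as an immediate consequence of the conjugation lemma): differentiate $t \mapsto Z(t) X Z(t)^{-1}$ with $Z(t) = \exp(t\Gamma)$ at $t=0$ to obtain tangency of $[\Gamma, X]$ to $\SE_2(3)$, then check the two algebraic identities directly. You also correctly identify the tangency claim as the only substantive content, which is precisely the point the paper emphasises in the remark immediately following the corollary.
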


It is interesting to note that, for $\Gamma \in \gothsim_2(3)$ and $X \in \SE_2(3)$, one has that $\Gamma X - X \Gamma$ lies in the tangent space at $X$, although this is not the case for $X\Gamma$ or $\Gamma X$ individually.

\begin{remark}
To the best of the authors' knowledge, the Lie group $\SIM_2(3)$ and its role as a realisation of the automorphisms of $\SE_2(3)$, is novel in this paper.
The notation is adapted from the computer vision literature, where $\mathbf{SIM}(3)$ is well understood as the group of 3D translations, rotations and scalings.
The extension to $\mathbf{SIM}_m(n)$ and its role as automorphims for $\SE_m(n)$, is straightforward.
\end{remark}

\section{Problem Description}

Consider a vehicle equipped with an inertial measurement unit (IMU).
Let $R \in \SO(3)$,  $v \in \R^3$, and $p \in \R^3$ denote the vehicle's attitude, velocity, and position, respectively, all with respect to some given inertial frame \{0\}.
The angular velocity $\Omega \in \R^3$ and the specific acceleration $a \in \R^3$ are measured by the IMU.
The dynamical model considered is
\begin{align}
    \dot{R} &= R \Omega^\times, &
    \dot{v} &= R a + g, &
    \dot{p} &= v,
    \label{eq:system_dynamics}
\end{align}
where $g \in \R^3$ is the gravity vector in the inertial frame (typically $g \approx 9.81\eb_3$).
We assume a measurement of the vehicle's position in the inertial frame,
 \begin{align}
    h(R, v, p) &= p,
    \label{eq:system_measurement}
\end{align}
is available.
Such measurements are provided by a GNSS.
The problem is to design an observer to estimate $(R, v, p)$.

\subsection{Lie Group Observer Architecture}

Let $X = (R, (v \; p)) \in \SE_2(3)$.
Then the dynamics \eqref{eq:system_dynamics} may be written as
\begin{align*}
    \dot{X} = X U + G X - [D,X],
\end{align*}
where
\begin{align*}
    U &= (\Omega, (a \; 0)) \in \se_2(3), &
    G &= (0, (g \; 0)) \in \se_2(3),  \\
    D &= (0, 0, \svel_D) \in \gothsim_2(3), &
    \svel_D &= \begin{pmatrix} 0 & 1 \\ 0 & 0 \end{pmatrix} \in \R^{2\times 2}.
\end{align*}

Let the state estimate $\hat{X} = (\hat{R}, (\hat{v} \; \hat{p})) \in \SE_2(3)$.
We introduce an auxiliary state $Z = (R_Z, (v_Z \; p_Z), \scale_Z) \in \SIM_2(3)$.
Then we propose the following observer dynamics,
\begin{align}
    \dot{\hat{X}} &= \hat{X} U + G \hat{X} - [D,\hat{X}] + Z \Delta Z^{-1} \hat{X}, \notag \\
    \dot{Z} &= (G - D) Z + Z \Gamma, \label{eq:observer-architecture}
\end{align}
where $\Delta \in \se_2(3)$ and $\Gamma \in \gothsim_2(3)$ are correction terms that will be designed later.
Define the observer error
\begin{align}
    \bar{E} := \sigma_Z^{-1}(X\hat{X}^{-1}) = Z^{-1} X \hat{X}^{-1} Z.
\end{align}
Then $\bar{E}$ has dynamics \citep{2021_vangoor_AutonomousErrorConstructive}
\begin{align}
    \dot{\bar{E}}
    &= - \bar{E} \Delta - [ \Gamma, \bar{E} ].
    \label{eq:error-dynamics}
\end{align}
In other words, the observer and system are $\bar{E}$-synchronous: the error dynamics depend only on the chosen correction terms $\Delta$ and $\Gamma$, and $\ddt \bar{E} = 0$ if the correction terms are set to zero \citep{2021_vangoor_AutonomousErrorConstructive}.

While the observer architecture \eqref{eq:observer-architecture} allows for $Z$ to be any element of $\SIM_2(3)$,
only some of the degrees of freedom are needed in the proposed design.
Specifically, let $R_Z(0) = I_3$ and $\scale_Z(0) = I_2$, and choose $\Omega_\Gamma = 0$ and $\svel_\Gamma = \svel_D$.
Then $\dot{R}_Z = 0$ and $\dot{\scale}_Z = 0$, and therefore $R_Z \equiv I_3$ and $\scale_Z \equiv I_2$.
It follows that $Z = (I_3, \trans_Z, I_2)$, and $R_Z$ and $\scale_Z$ will not be considered in the sequel.
Under these choices, the error $\bar{E} \in \SE_2(3)$ can be computed as
\begin{align}
    \bar{E}
    &= Z^{-1} X \hat{X}^{-1} Z, \notag \\
    &= \begin{pmatrix} I_3 & \trans_Z \\ 0 & I_2 \end{pmatrix}^{-1}
    \begin{pmatrix} R & \trans \\ 0 & I_2 \end{pmatrix}
    \begin{pmatrix} \hat{R} & \hat{\trans} \\ 0 & I_2 \end{pmatrix}^{-1}
    \begin{pmatrix} I_3 & \trans_Z \\ 0 & I_2 \end{pmatrix}
     , \notag \\
    &= \begin{pmatrix} I_3 & -\trans_Z \\ 0 & I_2 \end{pmatrix}
    \begin{pmatrix} R & \trans \\ 0 & I_2 \end{pmatrix}
    \begin{pmatrix} \hat{R}^\top & - \hat{R}^\top\hat{\trans} \\ 0 & I_2 \end{pmatrix}
    \begin{pmatrix} I_3 & \trans_Z \\ 0 & I_2 \end{pmatrix}
    , \notag \\
    &= \begin{pmatrix} R & \trans - \trans_Z \\ 0 & I_2 \end{pmatrix}
    \begin{pmatrix} \hat{R}^\top & \hat{R}^\top \trans_Z - \hat{R}^\top\hat{\trans} \\ 0 & I_2 \end{pmatrix}
    , \notag \\
    &=\begin{pmatrix} R \hat{R}^\top &\,& R (\hat{R}^\top \trans_Z - \hat{R}^\top\hat{\trans}) + (\trans-\trans_Z) \\ 0 &\,& I_2 \end{pmatrix}
    .
    \label{eq:ebar_expansion}
\end{align}
In summary, if $ \bar{E} = (R_{\bar{E}}, \trans_{\bar{E}}) \in \SE_2(3)$, then
\begin{align}
    R_{\bar{E}} &= R \hat{R}^\top, &
    \trans_{\bar{E}} &= (\trans - R \hat{R}^\top \hat{\trans}) - (I_3 - R \hat{R}^\top) \trans_Z.
    \label{eq:synchronous-error-expansion}
\end{align}

\section{Observer Design}

In the following theorem, the Lie group dynamics \eqref{eq:observer-architecture} are presented in expanded form in equation \eqref{eq:observer_equations}.
However, the formulation of the system and observer as Lie group elements is fundamental to the definition of $\bar{E}$.
This novel error definition is, in turn, core to the design of the correction terms $\Delta$ and $\Gamma$ and the Lyapunov analysis in the proof of the theorem. 

\begin{theorem} \label{thm:observer}
Consider the system $(R, v, p)$ with dynamics \eqref{eq:system_dynamics} and measurement \eqref{eq:system_measurement}.
Define $\hat{R} \in \SO(3)$, $\hat{\trans} = (\hat{v} \; \hat{p}) \in \R^{3 \times 2}$, and $\trans_Z = (v_Z \; p_Z)\in \R^{3\times 2}$, with dynamics
\begin{align}
    \dot{\hat{R}} &= \hat{R} \Omega^\times + \Omega_\Delta^\times \hat{R}, \notag \\
    \dot{\hat{v}} &= \hat{R} a + g + \tvel^1_\Delta + \Omega_\Delta^\times (\hat{v} - v_Z), \notag \\
    \dot{\hat{p}} &= \hat{v} + \tvel^2_\Delta + \Omega_\Delta^\times (\hat{p} - p_Z), \notag \\
    \dot{v}_Z &= g + \tvel_\Gamma^1, \notag \\
    \dot{p}_Z &= v_Z + \tvel_\Gamma^2,
    \label{eq:observer_equations}
\end{align}
where the correction terms $\Omega_\Delta \in \R^3, \tvel_\Delta = (\tvel^1_\Delta \; \tvel^2_\Delta) \in \R^{3 \times 2}, \tvel_\Gamma = (\tvel^1_\Gamma \; \tvel^2_\Gamma) \in \R^{3 \times 2}$ are given by
\begin{align*}
    \Omega_\Delta &= c (\hat{p} - p_Z) \times (p - p_Z), \\
    \tvel_\Delta &= (p - \hat{p}) L, \\
    \tvel_\Gamma &= (p - p_Z) L,
\end{align*}
with $c > 0$ and $L = (l_v \; l_p) \in \R^{1 \times 2}$, $l_p > 0$ and $l_v \in (0, l_p^2 /4)$.

Let $\bar{E} = (R_{\bar{E}}, \trans_{\bar{E}}) \in \SE_2(3)$ be defined as in \eqref{eq:synchronous-error-expansion} and suppose that $R a$ is persistently exciting as in \eqref{eq:persistence_excitation}.
Define
\begin{align} 
\bar{E}_s & := \{(I_3, 0_{3 \times 2})\} \label{eq:Es}\\ 
\bar{E}_u & := \{ (Q, 0_{3\times 2}) \in \SE_2(3) \mid \tr{Q} = -1 \}. \label{eq:Eu}
\end{align}
Then
\begin{enumerate}
    \item The solution $\trans_Z$ is uniformly continuous and bounded for all time and $\bar{E}$ converges to $\bar{E}_s \cup \bar{E}_u$.
    \item The set $\bar{E}_u$ is the set of unstable equilibria.
    \item The singleton set $\bar{E}_s$ is almost-globally asymptotically and locally exponentially stable.
\end{enumerate}
Moreover, if $\bar{E} \in \bar{E}_s$, then $\hat{R} = R$, $\hat{p} = p$, and $\hat{v} = v$.
\end{theorem}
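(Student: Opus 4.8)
The plan is to pass to the autonomous error coordinates and exploit the block structure of the resulting dynamics. First I would substitute the proposed correction terms $\Omega_\Delta$, $\tvel_\Delta$, $\tvel_\Gamma$ into the synchronous error dynamics \eqref{eq:error-dynamics} and read off the equations for $R_{\bar{E}}$ and $\trans_{\bar{E}} = (v_{\bar{E}} \; p_{\bar{E}})$. Writing $\Gamma = (0,\tvel_\Gamma,\svel_D)$ and $\Delta = (\Omega_\Delta,\tvel_\Delta)$ and expanding $-\bar{E}\Delta - [\Gamma,\bar{E}]$ blockwise, the expectation is that the translational part collapses, after the cancellations forced by the choices $\tvel_\Delta = (p-\hat{p})L$ and $\tvel_\Gamma = (p-p_Z)L$, to the linear time-invariant cascade
\begin{align*}
\dot{v}_{\bar{E}} = -l_v\, p_{\bar{E}}, \qquad \dot{p}_{\bar{E}} = v_{\bar{E}} - l_p\, p_{\bar{E}},
\end{align*}
which is crucially \emph{decoupled} from $R_{\bar{E}}$, while the rotational part reduces to $\dot{R}_{\bar{E}} = -R_{\bar{E}}\Omega_\Delta^\times$. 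Using \eqref{eq:synchronous-error-expansion} to rewrite $\Omega_\Delta$ in error coordinates via $p - p_Z = p_{\bar{E}} + R_{\bar{E}}\beta$ with $\beta := \hat{p}-p_Z$, the drive splits as $\Omega_\Delta = c\,\beta \times R_{\bar{E}}\beta + c\,\beta \times p_{\bar{E}}$, whose second summand vanishes with $p_{\bar{E}}$. This exhibits the whole error system as a cascade: an exponentially stable linear block driving a rotational block.

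For Claim 1, I would first dispatch boundedness. The tracking signals $\eta := p-p_Z$ and $\zeta := v-v_Z$ obey the forced system $\dot{\eta} = \zeta - l_p\eta$, $\dot{\zeta} = Ra - l_v\eta$, whose characteristic polynomial $\lambda^2 + l_p\lambda + l_v$ is Hurwitz for all $l_p,l_v>0$, with the extra bound $l_v<l_p^2/4$ forcing real distinct poles (which I expect is used to obtain clean monotone bounds in the cascade and PE-transfer steps). Bounded specific acceleration then yields bounded, uniformly continuous $(\eta,\zeta)$ by a BIBO argument (bounded derivatives give uniform continuity), from which boundedness and uniform continuity of $\trans_Z$ along the trajectory, and boundedness of $\Omega_\Delta$ and $\beta$, follow. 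The linear block then gives $(v_{\bar{E}},p_{\bar{E}}) \to 0$ exponentially. For the rotational part I would take $\Lyap = \trace{I_3 - R_{\bar{E}}}$, compute $\dot{\Lyap} = \trace{R_{\bar{E}}\Omega_\Delta^\times}$, and reduce it with the identities \eqref{eq:so3_identities}; writing $R_{\bar{E}} = \exp(\theta u^\times)$ and $w := \vex(\tfrac12(R_{\bar{E}}-R_{\bar{E}}^\top)) = \sin\theta\, u$, the core term becomes $-2c\sin^2\theta\,|u\times\beta|^2 \le 0$ plus a term linear in $p_{\bar{E}}$ that vanishes in the cascade. This identifies the critical set as $\{\theta=0\}\cup\{\theta=\pi\} = \bar{E}_s\cup\bar{E}_u$.

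The heart of the argument is the rotational convergence under excitation, carried out in the reduced ($p_{\bar{E}}=0$) dynamics and then transferred to the full cascade. Here $\eta$ is $Ra$ filtered through the stable second-order system $1/(s^2+l_ps+l_v)$, and I would invoke preservation of persistency of excitation under stable filtering to conclude that $\beta \approx R_{\bar{E}}^\top\eta$ inherits the excitation of $Ra$; a persistency-of-excitation/LaSalle argument then upgrades $\dot{\Lyap}\le 0$ to convergence of $\bar{E}$ to $\bar{E}_s\cup\bar{E}_u$, proving Claim 1. For Claim 2, note $\frac{d}{dt}\trace{R_{\bar{E}}} = 2c\sin^2\theta|u\times\beta|^2 \ge 0$, so $\dot\theta = -c\sin\theta|u\times\beta|^2$; hence $\theta=\pi$ is a repelling equilibrium value and $\bar{E}_u$ (the invariant set $\trace{Q}=-1$) is unstable. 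For Claim 3, linearizing at $\theta=0$ gives $\dot\theta \approx -c\theta|u\times\beta|^2$, and the PE lower bound on $\int|u\times\beta|^2$ yields uniform local exponential stability of $\bar{E}_s$; since $\bar{E}_u$ is a codimension-one (hence measure-zero) unstable invariant set, its basin of attraction is measure zero, giving almost-global asymptotic stability. Finally, $\bar{E}\in\bar{E}_s$ means $R_{\bar{E}}=I_3$ and $\trans_{\bar{E}}=0$, whence \eqref{eq:synchronous-error-expansion} forces $\hat{R}=R$ and then $\hat{v}=v$, $\hat{p}=p$.

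The main obstacle I anticipate is precisely this last block: making the persistency-of-excitation transfer rigorous (from $Ra$ through the second-order filter to the regressor appearing in $\Omega_\Delta$, with the rotation $R_{\bar{E}}$ itself entering $\beta$), and closing the cascade so that the vanishing perturbation $c\beta\times p_{\bar{E}}$ does not destroy convergence. Establishing \emph{uniform} local exponential stability (not merely asymptotic convergence) requires a PE lemma with uniform constants, and the almost-global claim requires showing that the stable manifold of the curved, time-varying invariant set $\bar{E}_u$ has measure zero on the compact group $\SO(3)$ — compactness rules out finite escape, but the measure-zero basin argument for a nonautonomous system is the delicate point.
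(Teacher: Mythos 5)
Your proposal follows the same skeleton as the paper's proof: the same error coordinates, the same observation that $\trans_{\bar{E}}$ obeys the autonomous linear system $\dot{\trans}_{\bar{E}} = -\trans_{\bar{E}}(CL - \svel_D)$ (your $\dot{v}_{\bar{E}} = -l_v p_{\bar{E}}$, $\dot{p}_{\bar{E}} = v_{\bar{E}} - l_p p_{\bar{E}}$ is exactly this), the same rotational equation $\dot{R}_{\bar{E}} = -R_{\bar{E}}\Omega_\Delta^\times$, the same identification of the critical set via $\tr(R(x\times Rx)^\times) = -\tfrac12\vert(I-R^2)x\vert^2 = -2\sin^2\theta\,\vert u\times x\vert^2$, and the same transfer of persistence of excitation from $Ra$ to $p-p_Z$ through the second-order filter (Lemma~\ref{lem:persistent_excitation} does this by splitting the filter into two first-order stages, which is exactly where $l_v < l_p^2/4$ is used, as you guessed).

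The one substantive difference is how the coupling term $c\,\beta\times p_{\bar{E}}$ is disposed of, and this is precisely the obstacle you flag at the end. You propose a cascade/vanishing-perturbation argument with $\Lyap = \trace{I_3 - R_{\bar{E}}}$ alone. The paper instead uses the combined Lyapunov function $\Lyap = \trace{I_3 - R_{\bar{E}}} + \tfrac{\alpha}{2m_p^2}\vert\trans_{\bar{E}}P\vert^2$, with $P$ diagonalising $CL-\svel_D$ and $\alpha \geq c/(2s_2)$, so that the cross term is absorbed into a complete square, $\dot{\Lyap} \leq -\tfrac{c}{2}\left(\vert(I-R_{\bar{E}}^2)(p-p_Z)\vert - \vert\trans_{\bar{E}}\vert\right)^2 \leq 0$. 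This buys genuine monotonicity of a single scalar function, which (i) makes the Barbalat step immediate, and (ii) reduces the instability of $\bar{E}_u$ to exhibiting nearby points of strictly smaller Lyapunov value via a second-order Taylor expansion of $\tr(I_3 - R_{\bar{E}}\exp(s\omega^\times))$ along $\omega = U\eb_1$. In your version, $\dot{\theta} = -c\sin\theta\,\vert u\times\beta\vert^2$ holds only for the unperturbed dynamics; with the $c\,\beta\times p_{\bar{E}}$ term present, $\theta$ need not be monotone, so both your ``repelling'' claim in item 2 and your LaSalle step require the extra integrability argument (exponential decay of $p_{\bar{E}}$ giving an $L^1$ perturbation) that you correctly anticipate -- workable, but strictly more delicate than the paper's route. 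Finally, for item 3 the paper does not carry out the linearisation and measure-zero-stable-manifold analysis you sketch: it invokes Theorem 4.3 of Trumpf et al.\ (2012) for almost-global asymptotic and local exponential stability given persistence of excitation of $p-p_Z$. So the two delicate points you identify are real, but the paper resolves the first by a better choice of Lyapunov function and the second by citation rather than by direct argument.
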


\begin{proof}
The following proof relies on lemmas that are provided in the appendix.

\underline{Proof of item 1}):
Let $C = (0 \; 1)^\top \in \R^2$.
Then the dynamics of $\trans_Z$ may be written as
\begin{align*}
    \dot{\trans} _Z
    &= \trans_Z \svel_D + \begin{pmatrix} g & 0 \end{pmatrix} + (p - p_Z) L, \\
    &= \trans_Z (\svel_D - C L) + \begin{pmatrix} g & 0 \end{pmatrix} + p L, \\
    &= - \trans_Z (C L - \svel_D) + \begin{pmatrix} g + l_v p & l_p p \end{pmatrix}.
\end{align*}
The characteristic equation of the homogeneous ODE is
\begin{align}
    \det(s I_2 + CL - \svel_D)
    &= \det \begin{pmatrix}
        s &  -1 \\ l_v & l_p + s
    \end{pmatrix}, \notag \\
    &= s^2 + l_p s + l_v = 0,
    \label{eq:ode_characteristic_eqn}
\end{align}
with solutions
\begin{align*}
    s = \frac{- l_p \pm \sqrt{l_p^2 - 4 l_v}}{2},
\end{align*}
which are strictly negative for $l_p, l_v > 0$ and $l_v < l_p^2 / 4$.
It follows that $\trans_Z$ is bounded and uniformly continuous.

Expanding \eqref{eq:error-dynamics} yields
\begin{align*}
    \dot{\bar{E}}
    &= \begin{pmatrix}
        R_{\bar{E}} & \trans_{\bar{E}} \\ 0 & I_2
    \end{pmatrix}
    \left(
        \begin{pmatrix}
            0 & \tvel_\Gamma \\ 0 & \svel_D
        \end{pmatrix}
        - \begin{pmatrix}
            \Omega_\Delta^\times & \tvel_\Delta \\ 0 & 0
        \end{pmatrix}
    \right)
    - \begin{pmatrix}
        0 & \tvel_\Gamma \\ 0 & \svel_D
    \end{pmatrix}
    \begin{pmatrix}
        R_{\bar{E}} & \trans_{\bar{E}} \\ 0 & I_2
    \end{pmatrix}, \\
    &= \begin{pmatrix}
        R_{\bar{E}} \Omega_\Delta^\times & R_{\bar{E}}(\tvel_\Gamma - \tvel_\Delta) + \trans_{\bar{E}} \svel_D \\ 0 & \svel_D
    \end{pmatrix}
    - \begin{pmatrix}
        0 & \tvel_\Gamma \\ 0 & \svel_D
    \end{pmatrix}, \\
    &= \begin{pmatrix}
        R_{\bar{E}} \Omega_\Delta^\times & \trans_{\bar{E}} \svel_D + R_{\bar{E}}(\tvel_\Gamma - \tvel_\Delta)- \tvel_\Gamma  \\ 0 & 0
    \end{pmatrix}.
\end{align*}
Hence, the dynamics of $\trans_{\bar{E}}$ are
\begin{align*}
    \dot{\trans}_{\bar{E}}
    &= \trans_{\bar{E}} \svel_D + R_{\bar{E}}(\tvel_\Gamma - \tvel_\Delta)- \tvel_\Gamma, \\
    &= \trans_{\bar{E}} \svel_D + R_{\bar{E}}((p - p_Z) L - (p - \hat{p}) L)- (p - p_Z) L, \\
    &= \trans_{\bar{E}} \svel_D + R_{\bar{E}}(\hat{p} - p_Z) L- (p - p_Z) L, \\
    &= \trans_{\bar{E}} \svel_D + R_{\bar{E}}(\hat{\trans} - \trans_Z) C L- (\trans - \trans_Z) C L, \\
    &= - \trans_{\bar{E}} (CL - \svel_D).
\end{align*}
It follows that $\trans_{\bar{E}}$ converges globally exponentially to zero since the eigenvalues of $-(C L - \svel_D)$ are constant and strictly negative, as shown by the solution of \eqref{eq:ode_characteristic_eqn}.

Recalling \eqref{eq:synchronous-error-expansion}, the dynamics of $R_{\bar{E}}$ can be expanded to
\begin{align*}
    \dot{R}_{\bar{E}}
    &= -R_{\bar{E}} \Omega_\Delta^\times, \\
    &= - c R_{\bar{E}} ((\hat{p} - p_Z) \times (p - p_Z))^\times, \\
    &= - c R_{\bar{E}} (((\hat{\trans} - \trans_Z) C) \times (\trans - \trans_Z) C)^\times, \\
    &= - c R_{\bar{E}} ((R_{\bar{E}}^\top(\trans - \trans_Z - \trans_{\bar{E}}) C) \times (\trans - \trans_Z) C)^\times, \\
    &= - c R_{\bar{E}} ((R_{\bar{E}}^\top(p - p_Z) - R_{\bar{E}}^\top p_{\bar{E}}) \times (p - p_Z))^\times.
\end{align*}

Let $P \in \GL(2)$ diagonalise $C L - \svel_D$; that is, $P (C L - \svel_D) = S P$ where $S = \diag(s_1, s_2)$ is a diagonal matrix of the eigenvalues $s_1 \geq s_2 > 0$ of $C L - \svel_D$.
Consider the Lyapunov function candidate
\begin{align}
    \Lyap(R_{\bar{E}}, \trans_{\bar{E}}) := \trace{I_3 - R_{\bar{E}}} + \frac{\alpha}{2 m_p^2}\vert \trans_{\bar{E}} P \vert^2,
    \label{eq:lyapunov}
\end{align}
where $m_p^2$ is the smallest eigenvalue of $P P^\top$ and $\alpha \geq \frac{c}{2 s_2}$.
The derivative of $\Lyap$ is given by
\begin{align*}
    \dot{\Lyap}
    &= - \tr(\dot{R}_{\bar{E}}) + \alpha m_p^{-2}\left\langle \trans_{\bar{E}} P, \dot{\trans}_{\bar{E}} P \right\rangle, \\
    &= c \tr( R_{\bar{E}} ((R_{\bar{E}}^\top(p - p_Z) - R_{\bar{E}}^\top p_{\bar{E}}) \times (p - p_Z))^\times )
    \\ &\phantom{==}
    - \alpha m_p^{-2} \left\langle \trans_{\bar{E}} P, \trans_{\bar{E}} (C L - \svel_D) P \right\rangle, \\
    &= c \tr( R_{\bar{E}} ((R_{\bar{E}}^\top(p - p_Z)) \times (p - p_Z))^\times )
    \\ &\phantom{==}
    - c \tr( R_{\bar{E}} ((R_{\bar{E}}^\top p_{\bar{E}}) \times (p - p_Z))^\times )
    \\ &\phantom{==}
    - \alpha m_p^{-2} \left\langle \trans_{\bar{E}} P, \trans_{\bar{E}} P S \right\rangle, \\
    &= - \frac{c}{2} \vert (I - R_{\bar{E}}^2) (p - p_Z) \vert^2
    - \alpha m_p^{-2} \left\langle \trans_{\bar{E}} P, \trans_{\bar{E}} P S \right\rangle
    \\ &\phantom{==}
    + c \left\langle
        (I - R_{\bar{E}}^2) (p - p_Z), \;
        p_{\bar{E}}
    \right\rangle, \\
    &\leq - \frac{c}{2} \vert (I - R_{\bar{E}}^2) (p - p_Z) \vert^2
    - \alpha m_p^{-2} s_2 \vert \trans_{\bar{E}} P \vert^2
    \\ &\phantom{==}
    + c \vert (I - R_{\bar{E}}^2)(p - p_Z) \vert
    \vert \trans_{\bar{E}} C \vert, \\
    &\leq - \frac{c}{2} \bigg( \vert (I - R_{\bar{E}}^2) (p - p_Z) \vert^2
    + \vert \trans_{\bar{E}} \vert^2
    \\ &\phantom{==}
    - 2 \vert
    (I - R_{\bar{E}}^2)(p - p_Z) \vert
    \vert \trans_{\bar{E}} \vert \bigg), \\
    &\leq - \frac{c}{2} \left( \vert (I - R_{\bar{E}}^2) (p - p_Z) \vert - \vert \trans_{\bar{E}} \vert \right)^2.
\end{align*}
Thus, the derivative of $\Lyap$ is negative semidefinite, with equality to zero only when $\trans_{\bar{E}} = 0$ and $(I -R_{\bar{E}}^2)(p - p_Z) = 0$.

It is clear that $\dot{\Lyap}$ is uniformly continuous as it is the composition of sums and products of uniformly continuous functions.
Thus Barbalat's lemma \citep[Lemma 4.2/4.3]{1991_slotine_AppliedNonlinearControl} provides that $\Lyap \to \Lyap_{\lim} \geq 0$ and $\dot{\Lyap} \to 0$, with $\Lyap_{\lim} \leq \Lyap(R_{\bar{E}}(0), \trans_{\bar{E}}(0))$ a constant.
Using the fact that $\trans_{\bar{E}} \to 0$ globally exponentially, this means that $\Lyap \to \trace{R_{\bar{E}} - I_3} \to \Lyap_{\lim}$.
Therefore,
\begin{align*}
    \ddt \trace{R_{\bar{E}} - I_3}
    &= - \trace{R_{\bar{E}} \Omega_\Delta^\times}, \\
    &= - \frac{1}{2}\trace{(R_{\bar{E}} - R_{\bar{E}}^\top) \Omega_\Delta^\times} \to 0,
\end{align*}
and, in particular, $\ddt R_{\bar{E}}^2 \to 0$.

Observe that $\ddt (p-p_Z) = Ra - (p-p_Z)(CL - \svel_D)$, and hence $(p-p_Z)$ is persistently exciting by Lemma \ref{lem:persistent_excitation}.
Then $\dot{\Lyap} \to 0$ implies that $R_{\bar{E}}^2(p - p_Z) \to (p - p_Z)$, and hence $R_{\bar{E}}^2 \to I_3$ by \cite[Lemma 5]{2023_vangoor_ConstructiveEquivariantObserver}.
Thus $R_{\bar{E}} \to R_{\bar{E}}^\top$, and hence $R_{\bar{E}} \to I_3$ or $R_{\bar{E}} \to U D U^\top$ for some $U \in \SO(3)$ where $D = \diag(1,-1,-1)$.
In the first case, $\bar{E} \to \bar{E}_s$ asymptotically.
In the second case, one has that $\tr(R_{\bar{E}}) \to \tr(U D U^\top) = \tr(D) = -1$, so $\bar{E} \to \bar{E}_u$ asymptotically.

\underline{Proof of item 2}):
To see that all the elements of $E_u$ are unstable equilibria of the system, let $\bar{E} = (R_{\bar{E}}, 0) \in E_u$ be arbitrary.
It suffices to show that, for any neighbourhood $\calU \subset \SE_2(3)$ of $\bar{E}$, there is an element $(Q, 0) \in \calU$ such that $\Lyap(Q,0) < \Lyap(\bar{E})$.
Since $\tr(R_{\bar{E}}) = -1$, $Q = U D U^\top$, where $D = \diag(1,-1,-1)$ and $U \in \SO(3)$.
Let $\omega = U \eb_1$ and define $Q(s) = R_{\bar{E}} \exp(s \omega^\times)$.
Then, using a second order Taylor expansion provides
\begin{align*}
    \Lyap(Q(s), 0)
    &= \tr(I_3 - Q(s)) + \frac{\alpha}{2 m_p^2} \vert 0 \vert^2, \\
    &= \tr(I_3 - R_{\bar{E}} \exp(s \omega^\times)), \\
    &\approx \tr(I_3 - R_{\bar{E}} (I + s \omega^\times + \frac{s^2}{2}\omega^\times \omega^\times)), \\
    &= \Lyap(R_{\bar{E}}, 0) - s \tr(R_{\bar{E}} \omega^\times) - \frac{s^2}{2} \tr( R_{\bar{E}} \omega^\times \omega^\times), \\
    &= \Lyap(R_{\bar{E}}, 0) - \frac{s^2}{2} \tr( R_{\bar{E}} (\omega \omega^\top - I)), \\
    &= \Lyap(R_{\bar{E}}, 0) - \frac{s^2}{2} (\vert \omega \vert^2 - \tr(R_{\bar{E}})), \\
    &= \Lyap(R_{\bar{E}}, 0) - s^2.
\end{align*}
It follows that, indeed, in any small neighbourhood of $0$, there exists $s$ such that $\Lyap(Q(s), 0) < \Lyap(R_{\bar{E}}, 0)$.
Thus $E_u$ is a set of unstable equilibria, since $\bar{E} \in E_u$ was taken arbitrary.

\underline{Proof of item 3}):
We have already proved that $V_{\bar{E}} \to 0$ globally exponentially and it only remains to show that the attitude part of $\bar{E}$ converges. 
Almost-global asymptotic and locally exponential stability now follows as a consequence of \citep[Theorem 4.3]{2012_trumpf_AnalysisNonLinearAttitude} and the persistence of excitation of $(p - p_Z)$ from Lemma \ref{lem:persistent_excitation}.
Finally, if $\bar{E} = I_5$, then one has that
\begin{align*}
    \hat{X} &= Z^{-1} \bar{E} Z \hat{X} = X \hat{X}^{-1} \hat{X} = X.
\end{align*}
This completes the proof.

\end{proof}

\section{Simulations}

Simulations were conducted to demonstrate the convergence properties of the proposed observer.
The true states of the system were initialised as
\begin{align*}
    R(0) &= I_3,&
    v(0) &= 0_{3 \times 1}~\mbox{m/s},&
    p(0) &= 0_{3 \times 1}~\mbox{m}.
\end{align*}
The input signals were then chosen to be
\begin{align*}
    \Omega(t) &= 1.0 \eb_3~\mbox{rad/s}, &
    a(t) &= 2 \eb_1
    - R^\top (0.75 p + g)~\mbox{m/s$^2$},
\end{align*}
where $g = 9.81 \eb_3$~m/s$^2$.
The observer states were initialised as
\begin{align*}
    \hat{R}(0) &= \exp(0.99\pi \eb_1^\times),\\
    \hat{v}(0) &= \begin{pmatrix}
        0.2 & 0.4 & -1.1
    \end{pmatrix}^\top~\mbox{m/s},\\
    \hat{p}(0) &= \begin{pmatrix}
        3 & -2 & 2
    \end{pmatrix}^\top~\mbox{m},
\end{align*}
and the gains were chosen to be $l_p = 20.0, l_v = 24.0, c = 4.0$.
Both the system and observer equations were simulated for 40~s using Euler integration at 100~Hz.

\begin{figure*}[!htb]
    \centering
    \includegraphics[width=1.0\linewidth]{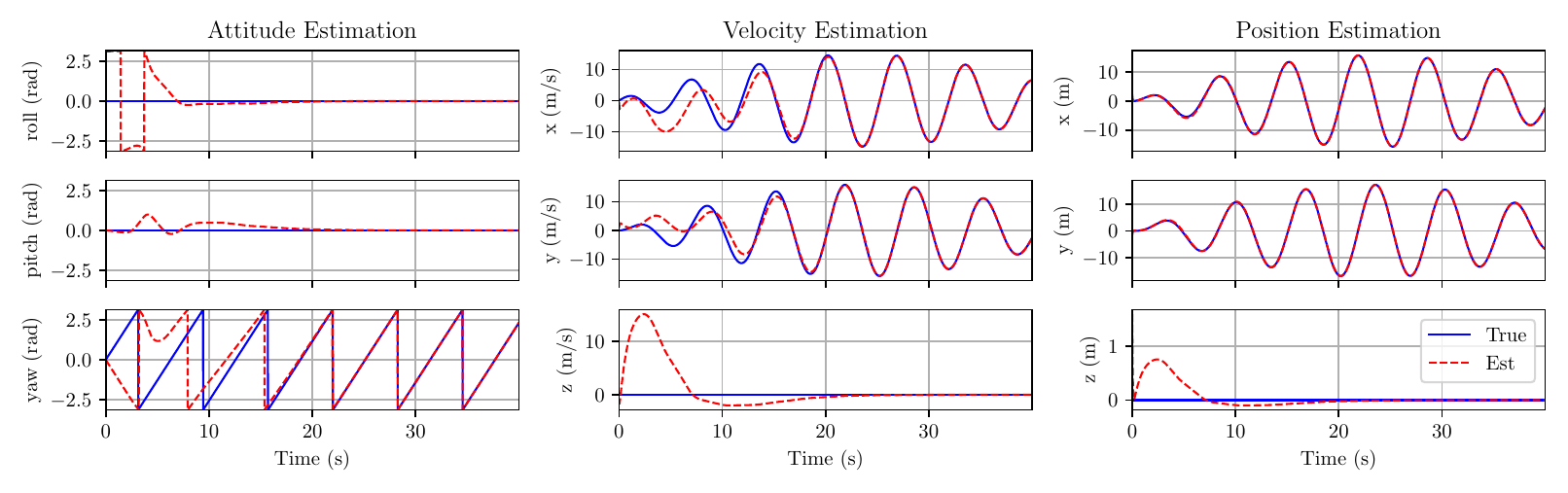}
    \caption{The estimated and true attitude, position, and velocity over time.
    The pitch and roll components of attitude can be seen to converge faster than the yaw.}
    \label{fig:observer_estimates}
\end{figure*}

\begin{figure}[!htb]
    \centering
    \includegraphics[width=0.7\linewidth]{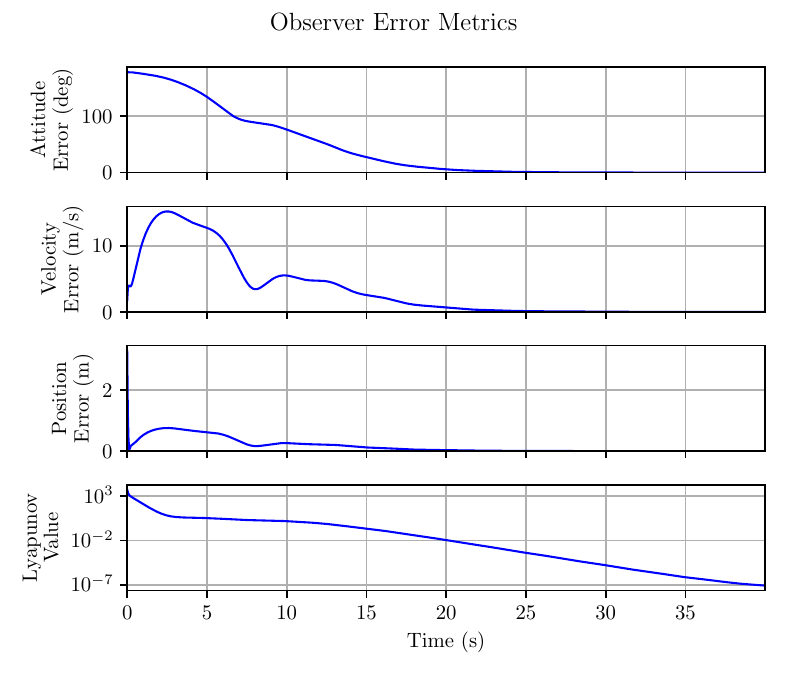}
    \caption{The errors in the observer estimates of attitude, position, and velocity, and the value of the Lyapunov function \eqref{eq:lyapunov} over time.
    The initial decrease is swift and is followed by a slower decrease associated the persistence of excitation condition of Theorem \ref{thm:observer}.}
    \label{fig:observer_errors}
\end{figure}

Figure \ref{fig:observer_estimates} shows the estimated and true attitude, position and velocity over time.
Figure \ref{fig:observer_errors} shows the evolution of observer error metrics over time.
The Lyapunov value is clearly decreasing for all time, verifying the proof of Theorem \ref{thm:observer}.
The position and velocity errors are both seen to converge quickly to zero, although they continue to be disturbed by the direction of attitude that is slower to converge.
The attitude error converges quickly in the first 5 seconds, and then continues to converge more slowly thereafter.
This is explained by the convergence of individual attitude components shown in Figure \ref{fig:observer_estimates}.
The roll and pitch components converge quickly, as they are easily determined through their effect on the measured position.
In contrast, the yaw component converges less quickly as it is only observable through the persistence of excitation of the specific acceleration in the inertial frame $Ra$ required for the convergence result of Theorem \ref{thm:observer}.

\section{Conclusion}

This paper presents an observer design for position-aided INS based on the the authors' recently developed observer architecture for group-affine systems \citep{2021_vangoor_AutonomousErrorConstructive}.
Nonlinear observers for position-aided INS are of particular interest due to their guarantees of stability, which are not available for standard EKF designs.
To the authors' knowledge, the proposed observer is the first solution for position-aided INS with almost-global and local exponential stability properties that are independent of the chosen gains.
Finally, these properties are demonstrated in simulation, where the observer solution is shown to converge even from an extremely poor initial estimate of the system state.

\section*{Appendix}

\begin{lemma}
Let $R \in \SO(3)$ and $x,y \in \R^3$.
Then
\begin{align}
    \tr(R (x \times y)^\times)
    &= -\frac{1}{2} \left\langle (I-R^2) x, \; (I - R^2) R^\top y  \right\rangle.
\end{align}
In particular,
\begin{align}
    \tr(R (x \times (R x))^\times)
    &= - \frac{1}{2}\vert (I-R^2) x \vert^2, \notag \\
    \tr(R ((R^\top x) \times x)^\times)
    &= - \frac{1}{2}\vert (I-R^2) x \vert^2.
\end{align}
\end{lemma}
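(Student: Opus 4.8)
The plan is to collapse the trace into an elementary bilinear form in $x$ and $y$, and only then try to recognise the claimed right-hand side. First I would invoke the cross-product-matrix identity $(x\times y)^\times = y x^\top - x y^\top$ from \eqref{eq:so3_identities} to write $R(x\times y)^\times = R y\, x^\top - R x\, y^\top$. Taking the trace and using $\tr(u w^\top) = w^\top u$ for column vectors $u,w$ gives
\[
\tr(R(x\times y)^\times) = x^\top R y - y^\top R x = x^\top (R - R^\top) y ,
\]
where the last step uses that $y^\top R x$ is a scalar and hence equals its own transpose $x^\top R^\top y$. I regard this as the \emph{workhorse} identity: it is exactly the bilinear form that the Lyapunov computation invokes for the cross term (there with first argument $R_{\bar E}^\top p_{\bar E}$ and second argument $p-p_Z$, producing $c\langle (I-R_{\bar E}^2)(p-p_Z),\,p_{\bar E}\rangle$).

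For the packaged right-hand side I would expand $\langle(I-R^2)x,\,(I-R^2)R^\top y\rangle = x^\top(I-R^2)^\top(I-R^2)R^\top y$ and simplify the middle matrix using $R^\top = R^{-1}$, so that $(I-R^2)^\top = I - R^{-2}$ and $(I-R^2)R^\top = R^\top - R$. The hard part is precisely here, and I expect it to be the main obstacle: the workhorse form $x^\top(R-R^\top)y$ is antisymmetric under $x\leftrightarrow y$, whereas $-\tfrac12(I-R^2)^\top(I-R^2)R^\top$ is not skew-symmetric in general, so the two cannot agree for all $x,y$ (a short computation shows the discrepancy is $\tfrac12 x^\top(R-R^{-3})y$, which vanishes only on the slice where it is used). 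I would therefore treat the workhorse identity as the genuinely general statement and verify the packaged quadratic expression on the symmetric slice $y = Rx$ that the two ``in particular'' cases isolate, double-checking the fully general formula before relying on it.

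The two specialisations then close cleanly. Substituting $y = Rx$ gives $(I-R^2)R^\top y = (I-R^2)x$ by $R^\top R = I$, so the right-hand side becomes $-\tfrac12\vert(I-R^2)x\vert^2$; meanwhile the workhorse form yields $x^\top(R-R^\top)Rx = x^\top(R^2 - I)x = -x^\top(I-R^2)x$, and these coincide because $x^\top(I-R^2)x = \tfrac12 x^\top(2I - R^2 - R^{-2})x = \tfrac12\vert(I-R^2)x\vert^2$, using the scalar symmetry $x^\top R^2 x = x^\top R^{-2} x$ (valid since $(R^2)^\top = R^{-2}$ on $\SO(3)$). The second form $\tr(R((R^\top x)\times x)^\times)$ is the same computation after applying the workhorse identity with $R^\top x$ in the first argument: $(R^\top x)^\top(R-R^\top)x = x^\top(R^2 - I)x$, which again equals $-\tfrac12\vert(I-R^2)x\vert^2$. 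Apart from the antisymmetry mismatch flagged above, every remaining step is routine $\SO(3)$ algebra.
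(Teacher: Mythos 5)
Your workhorse identity $\tr(R(x\times y)^\times) = x^\top(R-R^\top)y$ reproduces the paper's computation through its fourth line, and your suspicion about the packaged right-hand side is well founded: the general identity claimed in the lemma is \emph{false}, and the error is in the paper's own proof, not in your refusal to derive it. The offending step is the passage from $\tr(x^\top(R^2-I)R^\top y)$ to $-\tfrac12\tr(x^\top(I-R^{2\top})(I-R^2)R^\top y)$, which as a matrix identity reads $R^2 - I = -\tfrac12(2I - R^2 - R^{-2})$, i.e.\ $R^2 = R^{-2}$; this holds only when $R^4 = I$. A concrete counterexample: for $R$ the rotation by $\pi/3$ about $\eb_3$ and $x = \eb_1$, $y = \eb_2$, the left-hand side is $\tr(R\,\eb_3^\times) = -\sqrt3$, while $(I-R^2)^\top(I-R^2) = \mathrm{diag}(3,3,0)$ gives a right-hand side of $-\tfrac32 (R^\top)_{12} = -3\sqrt3/4$. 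Your closed form for the discrepancy, $\tfrac12 x^\top(R-R^{-3})y$, is exact.

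Your proofs of the two ``in particular'' identities are correct and complete: both reduce via the workhorse form to $x^\top(R^2-I)x$, which equals $-\tfrac12\vert(I-R^2)x\vert^2$ because the scalar $x^\top R^{-2}x$ coincides with $x^\top R^2x$ (equivalently, the discrepancy term vanishes on the slice $y = Rx$). This is all that is needed downstream: in the proof of Theorem~\ref{thm:observer} the lemma is invoked once in the second special form (with $x = p - p_Z$) and once for the cross term, where the workhorse identity directly yields $-c\,\tr\bigl(R_{\bar E}((R_{\bar E}^\top p_{\bar E})\times(p-p_Z))^\times\bigr) = c\langle(I-R_{\bar E}^2)(p-p_Z),\,p_{\bar E}\rangle$, exactly as used in the Lyapunov computation. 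So the main theorem is unaffected, but the lemma's first display should be corrected, e.g.\ to $\tr(R(x\times y)^\times) = x^\top(R-R^\top)y = -\langle x,\,(I-R^2)R^\top y\rangle$, with the quadratic form reserved for the two special cases.
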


\begin{proof}
Direct computation yields
\begin{align*}
    \tr(R (x \times y)^\times)
    &= \tr(R ( y x^\top - x y^\top)), \\
    &= \tr(R  y x^\top - R x y^\top), \\
    &= \tr( x^\top R  y - y^\top R x ), \\
    &= \tr( x^\top (R - R^\top) y ), \\
    &= \tr( x^\top (R^2 - I) R^\top y ), \\
    &= -\frac{1}{2}\tr( x^\top (I-R^{2\top})(I - R^2) R^\top y ), \\
    &= -\frac{1}{2} \left\langle (I-R^2) x, \; (I - R^2) R^\top y  \right\rangle.
\end{align*}
\end{proof}

\begin{lemma}
\label{lem:persistent_excitation}
Suppose $a(t) \in \R^3$ is a bounded, uniformly continuous, and persistently exciting signal.
Let $x_1(t), x_2(t) \in \R^3$ satisfy $\dot{x}_1 = -l_1 x_1 + x_2$ and $\dot{x}_2 = -l_2 x_1 + a$ such that $l_1 > 0$ and $l_2 \in (0, l_1^2/4)$.
Then $x_1$ and $x_2$ are also bounded, uniformly continuous, and persistently exciting.
\end{lemma}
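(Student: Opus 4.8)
The plan is to treat the three properties in sequence: boundedness and uniform continuity follow from the linear structure, persistence of excitation (PE) of $x_1$ is the core and I would prove it by a shift-compactness argument, and PE of $x_2$ is then bootstrapped from that of $x_1$. Writing the system with companion matrix $F = \left(\begin{smallmatrix} -l_1 & 1 \\ -l_2 & 0 \end{smallmatrix}\right)$ acting identically on each spatial coordinate, the characteristic polynomial is $s^2 + l_1 s + l_2$ (exactly as in \eqref{eq:ode_characteristic_eqn}), whose roots have negative real part precisely when $l_1 > 0$ and $l_2 \in (0, l_1^2/4)$. Hence $F$ is Hurwitz, variation of constants with bounded $a$ gives boundedness of $x_1, x_2$, and then $\dot{x}_1 = -l_1 x_1 + x_2$ and $\dot{x}_2 = -l_2 x_1 + a$ are bounded, so both states are Lipschitz and therefore uniformly continuous.

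Next I would reduce PE to a scalar filtering statement. Because the dynamics act identically in each coordinate of $\R^3$, for any constant unit vector $b$ the signals $\phi := b \times x_1$, $\psi := b \times x_2$, and $\eta := b \times a$ satisfy $\dot{\phi} = -l_1 \phi + \psi$ and $\dot{\psi} = -l_2 \phi + \eta$, so that the PE conditions on $a$, $x_1$, $x_2$ in \eqref{eq:persistence_excitation} are exactly uniform-in-$b$ lower bounds on $\int_t^{t+T} \vert \eta \vert^2$, $\int \vert \phi \vert^2$, $\int \vert \psi \vert^2$. Eliminating $\psi$ shows $\phi$ solves the second-order filter $\ddot{\phi} + l_1 \dot{\phi} + l_2 \phi = \eta$, whose transfer function $1/(s^2 + l_1 s + l_2)$ has no zeros; this zero-free property is what makes $x_1$ tractable.

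The main step is PE of $x_1$, which I would prove by contradiction and compactness. Suppose $x_1$ fails PE for the window $T$ of $a$, so that $\inf_{t,b} \int_t^{t+T} \vert b \times x_1 \vert^2 = 0$; pick times $t_n$ and unit $b_n$ realising window energies tending to $0$. The shifted signals $a_n := a(\cdot + t_n)$, $x_1^n$, $x_2^n$ are uniformly bounded and equicontinuous (using boundedness, the bounded derivatives, and uniform continuity of $a$), so Arzel\`a--Ascoli yields a subsequence with uniform limits $a_*, x_{1*}, x_{2*}$ on $[0,T]$ and $b_n \to b_*$ on the unit sphere. Passing the integral form of the ODE to the limit, $\phi_* := b_* \times x_{1*}$ solves $\ddot{\phi}_* + l_1 \dot{\phi}_* + l_2 \phi_* = b_* \times a_*$. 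Uniform convergence turns the hypothesis into $\int_0^T \vert \phi_* \vert^2 = 0$, so $\phi_* \equiv 0$; since the filter has no zeros this forces $b_* \times a_* \equiv 0$ on $[0,T]$, i.e. $\int_0^T \vert b_* \times a_* \vert^2 = 0$. But the same convergence gives $\int_0^T \vert b_* \times a_* \vert^2 = \lim_n \int_{t_n}^{t_n + T} \vert b_n \times a \vert^2 \geq \mu$ by PE of $a$, a contradiction. Thus $x_1$ is PE with the window $T$ and some $\mu_1 > 0$. The hard part is precisely this step: handling the PE uniformly over all directions $b$ and justifying that the limiting signals solve the same ODE.

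Finally I would obtain PE of $x_2$ from PE of $x_1$. A direct compactness argument would fail here, since eliminating $\phi$ shows $\psi$ obeys a filter with a zero at $s = -l_1$, so $\psi_* \equiv 0$ on a window only forces $\eta_* \propto e^{-l_1 \tau}$ there, which is consistent with $a$ being PE over that single window; recognising this obstruction is essential. Instead, using $x_2 = \dot{x}_1 + l_1 x_1$ gives $b \times x_2 = \dot{\phi} + l_1 \phi$, and integrating by parts over a long window $[t, t+NT]$ yields $\int \vert b \times x_2 \vert^2 = \int \vert \dot{\phi} \vert^2 + l_1(\vert \phi(t{+}NT) \vert^2 - \vert \phi(t) \vert^2) + l_1^2 \int \vert \phi \vert^2 \geq l_1^2 N \mu_1 - l_1 c_1^2$, where $c_1 := \sup_t \vert x_1 \vert$ and the final bound uses PE of $x_1$ on each of the $N$ subwindows of length $T$. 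Choosing $N > c_1^2/(l_1 \mu_1)$ makes the right-hand side a positive constant independent of $t$ and $b$, establishing PE of $x_2$ and completing the proof.
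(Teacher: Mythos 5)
Your proof is correct, but it takes a genuinely different route from the paper. The paper factors the second-order dynamics into a cascade of two first-order stable filters: choosing $k$ to be a root of $k^2 - l_1 k + l_2 = 0$ (real precisely because $l_2 < l_1^2/4$), the signal $z = -k x_1 + x_2$ satisfies $\dot{z} = -kz + a$ and then $\dot{x}_1 = -(l_1 - k)x_1 + z$, so boundedness, uniform continuity, and persistence of excitation propagate through each stage by an external first-order-filter lemma cited from the companion paper. Your argument is self-contained: the Arzel\`a--Ascoli contradiction argument replaces that cited lemma for $x_1$, and it only needs $s^2 + l_1 s + l_2$ to be Hurwitz rather than the real-root condition, at the cost of producing no explicit $\mu_1$. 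More significantly, you actually prove persistence of excitation of $x_2$, which the paper's proof never addresses --- it stops after $x_1$, and PE of $x_2$ does not follow immediately from PE of $z$ and $x_1$, since sums of PE signals need not be PE. Your diagnosis that a naive compactness argument for $x_2$ fails because of the transfer-function zero at $s = -l_1$ is exactly right, and the integration-by-parts bound $\int_t^{t+NT} \vert b \times x_2 \vert^2 \, \td\tau \geq l_1^2 N \mu_1 - l_1 c_1^2$ with $N$ large is a clean way to close that gap. (Only PE of $x_1 = p - p_Z$ is used in the proof of Theorem \ref{thm:observer}, so the paper's omission is harmless for the main result, but your proof establishes the lemma as stated.)
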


\begin{proof}
Define $z = -k x_1 + x_2$ where $k = \frac{1}{2} \sqrt{l_1^2 - 4l_2}$.
Note that $l_1^2 - 4l_2 > 0$ necessarily, so $k > 0$ as well.
Differentiating $z$ yields
\begin{align*}
    \dot{z}
    &= -k \dot{x}_1 + \dot{x}_2, \\
    &= k l_1 x_1 -k x_2 -l_2 x_1 + a, \\
    &= k^2 x_1 -k^2 x_1 + k l_1 x_1 -k x_2 -l_2 x_1 + a, \\
    &= - k z + a -(k^2 - k l_1 +l_2) x_1 , \\
    &= - k z + a.
\end{align*}
Hence $z$ is bounded, uniformly continuous, and persistently exciting by \cite[Lemma A.1]{2023_vangoor_ConstructiveEquivariantObserver}.
Observe that the dynamics of $x_1$ may be written as
\begin{align*}
    \dot{x}_1
    &= -l_1 x_1 + x_2, \\
    &= -l_1 x_1 + k x_1 - k x_1 + x_2, \\
    &= - (l_1 - k) x_1 + z.
\end{align*}
It follows that $x_1$ is bounded, uniformly continuous, and persistently exciting as $k < l_1$ \cite[Lemma A.1]{2023_vangoor_ConstructiveEquivariantObserver}.
\end{proof}

\bibliographystyle{plainnat}
\bibliography{IFAC_2023_INS}

\end{document}